\documentclass[11pt]{article}

\usepackage[utf8]{inputenc}
\usepackage[T1]{fontenc}
\usepackage{lmodern}
\usepackage[protrusion=true,expansion=false]{microtype}
\usepackage[margin=1in, top=1.1in, bottom=1.1in]{geometry}

\usepackage{amsmath,amssymb,amsthm}

\usepackage{booktabs}
\usepackage{graphicx}

\usepackage[colorlinks=true,
            linkcolor=blue!70!black,
            urlcolor=blue!70!black,
            citecolor=blue!70!black]{hyperref}

\usepackage{xcolor}
\usepackage{enumitem}
\usepackage[numbers,sort&compress]{natbib}
\usepackage{tikz}
\usetikzlibrary{arrows.meta, positioning, fit, backgrounds, patterns, decorations.pathreplacing}

\usepackage{titlesec}
\titleformat{\section}{\large\bfseries}{\thesection}{1em}{}
\titleformat{\subsection}{\normalsize\bfseries}{\thesubsection}{1em}{}

\theoremstyle{plain}
\newtheorem{theorem}{Theorem}[section]

\newtheorem{corollary}[theorem]{Corollary}

\theoremstyle{definition}
\newtheorem{definition}[theorem]{Definition}
\newtheorem{example}[theorem]{Example}
\newtheorem{assumption}[theorem]{Assumption}

\theoremstyle{remark}
\newtheorem{remark}[theorem]{Remark}

\newcommand{\Allow}{\textsc{Allow}}
\newcommand{\Refuse}{\textsc{Refuse}}
\newcommand{\Escalate}{\textsc{Escalate}}
\newcommand{\Dset}{\mathcal{D}}
\newcommand{\Adm}{\mathrm{Adm}}
\newcommand{\Req}{\mathrm{Req}}
\newcommand{\Traces}{\mathrm{Traces}}
\newcommand{\eval}{\mathrm{eval}}
\newcommand{\exec}{\mathrm{exec}}

\title{Atomic Decision Boundaries: A Structural Requirement\\
for Guaranteeing Execution-Time Admissibility in Autonomous Systems}

\author{Marcelo Fernandez\\
TraslaIA\\
\texttt{info@traslaia.com}}

\date{April 2026 \\ \small \href{https://arxiv.org/abs/2604.17511}{arXiv:2604.17511 [cs.AI]} \quad \href{https://doi.org/10.5281/zenodo.19670649}{DOI: 10.5281/zenodo.19670649}}

\begin{document}
\maketitle

\begin{abstract}
Autonomous systems increasingly execute actions that directly modify shared state,
creating an urgent need for precise control over which transitions are permitted
to occur.  Existing governance mechanisms evaluate policies prior to execution or
reconstruct behavior post hoc, but do not enforce admissibility at the exact
moment a state transition is committed.

We introduce the \emph{atomic decision boundary}, a structural property of
admission control systems in which the decision and the resulting state
transition are jointly determined as a single indivisible step.  Formalizing
execution as a labeled transition system (LTS), we distinguish two classes:
\emph{atomic systems}, where evaluation and transition are coupled within a
single LTS step, and \emph{split evaluation systems}, where they are separate
transitions that may be interleaved by environmental actions.

Under realistic concurrent environments (Assumptions~2.1--2.2), we prove that
no construction can make a split system equivalent to an atomic system with
respect to admissibility under all execution traces.  This limitation is a
structural property of the system architecture, not of policy expressiveness
or state availability.  We further formalize the \Escalate{}
outcome---absent from classical time-of-check/time-of-use (TOCTOU) analyses---
and show that its resolution is itself subject to the atomic boundary
requirement, extending the guarantees of the main theorem.

We map widely used policy enforcement mechanisms, including RBAC and Open
Policy Agent~(OPA), to the split model, and contrast them with systems that
enforce atomic decision boundaries.  Our results establish that admissibility
is a property of execution, not of prior evaluation, and that guaranteeing it
requires structural guarantees at the decision boundary.
This paper is the formal foundation of a 6-paper Agent Governance Series.
Companion papers address stateful enforcement (ACP, Paper~1,
arXiv:2603.18829~\cite{acp26}), behavioral drift detection above the
enforcement boundary (IML, Paper~2,~\cite{iml26}), fair multi-agent allocation
(Paper~3,~\cite{fairgov26}), composition irreducibility
(Paper~4,~\cite{fernandez2026comp}), and runtime execution validity under
partial observability---RAM (Paper~5,~\cite{fernandez2026ram}).
\end{abstract}

\tableofcontents
\newpage

\section{Introduction}
\label{sec:intro}

Autonomous systems are increasingly deployed in settings where they plan and
execute multi-step action sequences that directly and irreversibly modify system
state.  Financial transfers, document operations, API calls, and inter-agent
delegations all commit changes that cannot simply be rolled back.  As a result,
the question of whether a given action should be allowed to execute must be
resolved precisely at the moment the transition is committed---not before, and
not after.

Consider a concrete scenario.  An agent is authorized to transfer funds from
a shared ledger.  A policy engine evaluates the request against the current
balance and account status, returns $\Allow$, and triggers the transfer.  In
the interval between that evaluation and the execution of the debit, a
concurrent transaction---another agent repaying a loan---reduces the balance
below the required amount.  The transfer executes against an inadmissible
state.  The guarantee the policy was meant to enforce has evaporated.  No
refinement of the policy expression could have prevented this: the failure
is not in what the policy \emph{knows}, but in \emph{when} the transition
fires relative to the state it governs.  We call the gap between these two
moments the \emph{decision boundary problem}, and formalize it precisely in
what follows.

Current governance approaches address this question indirectly.  \emph{Policy
engines} evaluate rules against input data prior to execution.  \emph{Access
control models} determine permissions based on static role or attribute
assignments.  \emph{Audit systems} reconstruct behavior after the fact.
These mechanisms are useful and widely deployed, but they share a structural
limitation: the decision to allow an action and the transition that applies
that action are separate operations.

This separation introduces a gap.  The decision is evaluated in one system
state; the transition fires in a---potentially different---state.  An action
admissible when evaluated may become inadmissible by execution time.  Under
the environment model we formalize in \S\ref{sec:model}, no policy, regardless
of its sophistication, can close this gap from within a split architecture:
the gap is not an information problem---it is an architectural one.

This paper formalizes that observation precisely.  We define the
\emph{decision boundary} of a governance system as the point at which the
system resolves whether a state transition $(s \to s')$ is permitted to exist.
We then define an \emph{atomic decision boundary} as one in which the
resolution and the resulting transition are jointly computed as a single
indivisible step in the labeled transition system model of execution.

We prove that systems lacking atomicity at the decision boundary---henceforth
\emph{split evaluation systems}---cannot guarantee admissibility at execution
time under all execution traces.  The impossibility is constructive: we exhibit
a concrete execution trace that witnesses the gap.

A distinguishing feature of our model is the three-valued decision domain
$\Dset = \{\Allow, \Refuse, \Escalate\}$.
The $\Escalate$ outcome---absent from classical TOCTOU analyses
\cite{bishop96, tsafrir08}---suspends the transition pending supervisor review,
a first-class governance action in modern agent systems \cite{acp26}.
We show that $\Escalate$ does not eliminate the atomicity requirement; it
\emph{transfers} it to the resolution step, and prove that the same
structural gap arises there unless resolution is itself atomic.

\paragraph{Contributions.}
\begin{enumerate}[label=\normalfont(\roman*)]
  \item We formalize the execution model for admission control as an LTS and
    define admissibility as a property of state transitions, not of actions
    in isolation (\S\ref{sec:model}).

  \item We define the \emph{atomic decision boundary} in LTS terms and
    contrast it with \emph{split evaluation systems} via a precise structural
    distinction (\S\ref{sec:dbm}).

  \item We give a complete formal semantics for the $\Escalate$ outcome,
    including the supervisor resolution function and the state it operates
    over (\S\ref{sec:dbm}).

  \item We prove, via a constructive counterexample trace, that no general
    construction can make a split system equivalent to an atomic system with
    respect to admissibility (\S\ref{sec:results}).  Three corollaries
    follow: impossibility of execution-time guarantees in split systems,
    insufficiency of external state, and admissibility as an execution-time
    property.

  \item We prove the \emph{escalation closure requirement}: the $\Escalate$
    outcome preserves admissibility guarantees if and only if the supervisor
    resolution is itself atomic (\S\ref{sec:results}).

  \item We map RBAC, OPA, and ACP to the two classes, providing a structural
    taxonomy of existing governance mechanisms (\S\ref{sec:mapping}).
\end{enumerate}

This paper is Paper~0 in a series of six.  The Agent Control Protocol~(ACP)~\cite{acp26}
(Paper~1) instantiates the atomic decision boundary for multi-agent governance.
The Invariant Measurement Layer~(IML)~\cite{iml26} (Paper~2) operates above the
atomic enforcement layer and addresses behavioral drift that remains invisible
to enforcement-based monitoring.  Paper~3~\cite{fairgov26} addresses the fair
allocation of atomic governance decisions across agents operating under shared
resource constraints.  Paper~4~\cite{fernandez2026comp} proves the irreducibility
of the four-layer governance architecture.  Paper~5~\cite{fernandez2026ram}
closes the series by providing the runtime operational mechanism: the
Reconstructive Authority Model (RAM), which determines whether execution is
valid at every step under partial observability.

\section{System Model}
\label{sec:model}

We model the execution environment as a labeled transition system and derive
the formal definition of admissibility from it.  The section is organized into
two parts: an execution model (\S\ref{sec:execmodel}) that establishes the LTS
and the admissibility predicate, and an environment and threat model
(\S\ref{sec:envmodel}) that specifies the assumptions under which the main
theorem is stated.

\subsection{Execution Model}
\label{sec:execmodel}

\begin{definition}[Labeled Transition System]
\label{def:lts}
A \emph{labeled transition system} (LTS) is a triple $M = (S, \mathrm{Act}, {\to})$ where:
\begin{itemize}
  \item $S$ is a set of \emph{states};
  \item $\mathrm{Act} = A \cup A_{\mathrm{env}}$ is a set of \emph{action labels},
    partitioned into \emph{agent actions} $A$ (subject to governance) and
    \emph{environment actions} $A_{\mathrm{env}}$ (external state modifications),
    with $A \cap A_{\mathrm{env}} = \emptyset$;
  \item ${\to} \subseteq S \times \mathrm{Act} \times S$ is the
    \emph{transition relation}.
\end{itemize}
We write $s \xrightarrow{a} s'$ for $(s, a, s') \in {\to}$.
\end{definition}

\begin{definition}[Execution Trace]
\label{def:trace}
An \emph{execution trace} $\sigma$ is a finite alternating sequence
$s_0\, a_1\, s_1\, a_2\, s_2 \cdots a_n\, s_n$
such that $s_{i-1} \xrightarrow{a_i} s_i$ for all $i \in \{1, \ldots, n\}$.
We denote the set of all execution traces of $M$ by $\Traces(M)$.
\end{definition}

\begin{definition}[Admissibility]
\label{def:adm}
An \emph{admissibility predicate} is a function
$\Adm : S \times A \to \{\mathsf{true}, \mathsf{false}\}$.
Action $a \in A$ is \emph{admissible} in state $s \in S$ if $\Adm(s, a) = \mathsf{true}$.
Admissibility is a property of the state at the moment of transition, not of
the action in isolation.
\end{definition}

\begin{definition}[Admissibility-Preserving System]
\label{def:admpres}
A system $M$ \emph{preserves admissibility} if for every trace
$\sigma \in \Traces(M)$ and every agent action $a \in A$ appearing in $\sigma$,
whenever $s \xrightarrow{a} s'$ occurs in $\sigma$, we have $\Adm(s, a) = \mathsf{true}$.
\end{definition}

\paragraph{The evaluation--execution gap.}
In many practical systems, deciding whether to allow an action and applying
the resulting transition are implemented as distinct operations.  Let $D : S
\times A \to \Dset$ denote the \emph{decision function} that evaluates
admissibility, and let $T : S \times A \to S$ denote the \emph{transition
function} that applies it.  In split systems, $D$ and $T$ are invoked
separately.  The state in which $D$ is invoked, call it $s_{\eval}$, need not
equal the state in which $T$ fires, call it $s_{\exec}$.  This is the
structural source of the problem we formalize.

\subsection{Environment and Threat Model}
\label{sec:envmodel}

The main theorem (Theorem~\ref{thm:main}) is a conditional result: it holds
under a specific model of the execution environment.  We state this model
explicitly here, as two assumptions, so that the scope of the result is
unambiguous.  The environment model is \emph{adversarial}: the environment is
assumed capable of interleaving state-modifying actions at the worst possible
moment, with no synchronization guarantee for the governance system.  This is
the standard assumption in concurrent systems analysis and corresponds to the
worst-case deployment context for multi-agent systems operating over shared
mutable state.  Readers who restrict to sequential or single-threaded settings
should consult Remark~\ref{rem:scope} for the precise boundary conditions under
which the theorem applies.

\begin{assumption}[Non-Triviality]
\label{asm:nontrivial}
The system satisfies the following:
\begin{enumerate}[label=(\roman*)]
  \item There exist $s \in S$ and $a \in A$ with $\Adm(s, a) = \mathsf{true}$
    and $D(s, a) = \Allow$ (the decision function grants at least one
    admissible action without escalation).
  \item There exist $s \in S$, $a \in A$, and $e \in A_{\mathrm{env}}$ such that
    $s \xrightarrow{e} s^*$ and $\Adm(s^*, a) = \mathsf{false}$.
\end{enumerate}
\end{assumption}

Assumption~\ref{asm:nontrivial} is satisfied by any system where admissibility
genuinely depends on state.  A governance mechanism over a state-independent
system provides no meaningful constraint; a system whose state cannot change
between decision and execution is already effectively atomic.  In practice,
every multi-agent system operating over shared mutable state---including file
systems, API quota registries, agent delegation ledgers, and shared databases---
satisfies both conditions: actions can be admitted and state can be modified
by concurrent environment processes between evaluation and execution.

\begin{assumption}[Environment Model]
\label{asm:env}
The environment satisfies the following:
\begin{enumerate}[label=(\roman*)]
  \item \emph{Uncontrollability.}  Environment actions $e \in A_{\mathrm{env}}$
    are not subject to governance control: the admission system cannot prevent,
    delay, or reorder them.

  \item \emph{Arbitrary interleaving.}  In a split evaluation system, environment
    actions may fire at any point in the LTS between the decision transition
    $\mathit{dec}(a)$ and the execution transition $\mathit{exec}(a)$.  The
    system provides no synchronization barrier that would prevent such
    interleaving.

  \item \emph{State visibility.}  The decision function $D$ observes the system
    state at the time it is invoked.  It does not have predictive access to
    future states produced by environment actions that have not yet fired.
\end{enumerate}
\end{assumption}

Assumption~\ref{asm:env}(i) reflects the standard distinction between
controllable and uncontrollable events in supervisory control
theory~\cite{ramadge87}.  Assumption~\ref{asm:env}(ii) is the adversarial model
under which the theorem is stated: the environment is assumed capable of
interleaving at the worst moment.  Assumption~\ref{asm:env}(iii) is satisfied
by every decision function that operates on observable state; a function
with oracular access to future states would not correspond to any implementable
system.  The three assumptions hold collectively in all multi-agent systems
where agents share mutable state with concurrent external processes.

\begin{example}[Running Example]
\label{ex:running}
An agent manages file operations on a shared resource.  The state $s$ encodes
the set of locked files $\mathit{locked}(s) \subseteq \mathcal{F}$ and the
current quota usage $\mathit{quota}(s) \in \mathbb{N}$.  Action $\mathit{write}(f)$
is admissible in $s$ iff $f \notin \mathit{locked}(s)$ and
$\mathit{quota}(s) < \mathit{quota}_{\max}$.  Environment action
$\mathit{lock}(f)$ transitions any state $s$ to state $s^* = s[\mathit{locked}
\mapsto \mathit{locked}(s) \cup \{f\}]$, making $\mathit{write}(f)$ inadmissible.
Assumption~\ref{asm:nontrivial} is satisfied.
\end{example}

\section{The Decision Boundary Model}
\label{sec:dbm}

\subsection{Decision and Transition Functions}

\begin{definition}[Decision Domain]
Let $\Dset = \{\Allow, \Refuse, \Escalate\}$.  We call elements of $\Dset$
\emph{dispositions}.
\end{definition}

\begin{definition}[Decision Function]
\label{def:decfun}
A \emph{decision function} is a mapping
$D : S \times A \to \Dset$
that assigns a disposition to each (state, action) pair.  We say $D$ is
\emph{consistent with} $\Adm$ if the following hold for all $(s,a) \in S \times A$:
\begin{enumerate}[label=(\roman*)]
  \item $D(s, a) = \Allow \Rightarrow \Adm(s, a)$;
  \item $D(s, a) = \Refuse \Rightarrow \neg\Adm(s, a)$;
  \item $\Adm(s, a) \Rightarrow D(s, a) \neq \Refuse$
        \quad (admissible states are never refused without escalation);
  \item $\neg\Adm(s, a) \Rightarrow D(s, a) \neq \Allow$
        \quad (inadmissible states are never allowed).
\end{enumerate}
Conditions (i)--(iv) together imply:
$D(s,a) = \Allow \Leftrightarrow \Adm(s,a)$ whenever $D(s,a) \neq \Escalate$.
The $\Escalate$ disposition is reserved for cases where admissibility is
indeterminate and requires external resolution.
\end{definition}

\begin{definition}[Transition Function]
\label{def:transfun}
A \emph{transition function} is a mapping
$T : S \times A \to S$
such that $T(s, a)$ gives the resulting state when action $a$ is applied in
state $s$.
\end{definition}

\begin{remark}[The role of $F$ versus composition of $D$ and $T$]
\label{rem:fvscomp}
One might observe that the type signature $F : S \times A \to \Dset \times S$
is mathematically identical to the pointwise pair $(D, T)$: given $(s, a)$,
compute $D(s,a)$ for the disposition and $T(s, a)$ for the resulting state.
Any system---including a split one---can define such a function as a
mathematical object.  The contribution of the Atomic Decision Boundary
(Definition~\ref{def:atomic}) is not in the signature but in the
\emph{indivisibility axiom}: the computation of $(d, s')$ must correspond to a
\emph{single arc} in the LTS, exposing no intermediate state to environment
actions.  In a split system, the same pair $(d, s')$ is eventually produced,
but the path from $s$ to $s'$ passes through the intermediate state $s_D$
(introduced formally in Definition~\ref{def:split} below, \S\ref{sec:split})
where environment actions can interleave.  The structural difference between
the two classes is in the topology of the LTS---specifically, whether there
exists an intermediate state---not in the domain or codomain of the function.
\end{remark}

\subsection{Atomic Decision Boundary}
\label{sec:atomic}

\begin{definition}[Atomic Decision Boundary]
\label{def:atomic}
A system satisfies an \emph{atomic decision boundary} if there exists a
function
\[
  F : S \times A \to \Dset \times S
\]
such that for every $(s, a) \in S \times A$:
\begin{align}
  F(s, a) &= (d,\, \hat{s}') \label{eq:atomic-pair}\\
  d &= D(s, a) \label{eq:atomic-decision}\\
  \hat{s}' &= (T(s, a),\, P) \quad \text{if } d = \Allow \label{eq:atomic-transition}\\
  \hat{s}' &= (s,\, P) \quad \text{if } d = \Refuse \label{eq:atomic-refuse}\\
  \hat{s}' &= (s,\, P \cup \{(s,a)\}) \quad \text{if } d = \Escalate
              \label{eq:atomic-escalate}
\end{align}
where $\hat{s}' \in \hat{S} = S \times \Req$ is the post-transition state
in the escalation-extended state space (defined formally in
Definition~\ref{def:escstate}, \S\ref{sec:escalate}).  For
$d \in \{\Allow, \Refuse\}$, the pending-request component $P$ is unchanged
by the transition; the state reduces to $S$ in those cases.
The computation of $d$ and $\hat{s}'$ corresponds to a \emph{single indivisible
transition} in the LTS: $s \xrightarrow{F(s,a)} \hat{s}'$ with no intermediate
state between the evaluation of $d$ and the determination of $\hat{s}'$.
\end{definition}

The indivisibility condition in Definition~\ref{def:atomic} is a model-level
axiom: it asserts that no environment action $e \in A_{\mathrm{env}}$ can be
interleaved between the evaluation of $D(s, a)$ and the firing of $T(s, a)$.
In terms of the LTS, the $F(s,a)$ transition is a single arc; it does not
decompose into two separate transitions with any state in between.

\begin{remark}[Implementation independence]
\label{rem:impl}
Atomicity in Definition~\ref{def:atomic} is a structural property of the
system architecture, not of any particular implementation.  Whether it is
realized via a database transaction, a monitor lock, a hardware instruction, or
a protocol contract is an implementation matter.  The definition abstracts over
all of these and asserts only that the LTS representation of the system does
not expose an intermediate state.
\end{remark}

\subsection{Split Evaluation Systems}
\label{sec:split}

\begin{definition}[Split Evaluation System]
\label{def:split}
A system is a \emph{split evaluation system} if the decision function $D$ and
the transition function $T$ are evaluated as distinct LTS transitions,
potentially separated by environment actions.  Formally, the execution of
agent action $a$ from state $s$ decomposes into:
\begin{enumerate}[label=(\arabic*)]
  \item A \emph{decision transition}: $s \xrightarrow{\mathit{dec}(a)} s_D$,
    where $s_D$ encodes the recorded disposition $d = D(s, a)$ alongside the
    current state.
  \item An optional sequence of environment transitions:
    $s_D \xrightarrow{e_1} \cdots \xrightarrow{e_k} s^*$, for $k \geq 0$.
  \item An \emph{execution transition}: $s^* \xrightarrow{\mathit{exec}(a)}
    T(s^*, a)$, contingent on $d = \Allow$.
\end{enumerate}
The states $s$ (evaluation state) and $s^*$ (execution state) may differ
whenever $k \geq 1$.
\end{definition}

The structural distinction between atomic and split systems is not a matter of
policy expressiveness, state availability, or evaluation latency.  It is a
matter of whether the LTS exposes an intermediate state between decision and
execution.  In atomic systems it does not; in split systems it does.

\subsection{Escalation Semantics}
\label{sec:escalate}

The $\Escalate$ disposition is structurally distinct from $\Refuse$: rather
than terminating the request, it suspends the transition pending external
review.  This outcome has no counterpart in classical TOCTOU analyses, which
operate in binary (permit/deny) models.  We give it a precise semantics.

\begin{definition}[Escalation State]
\label{def:escstate}
Let $\Req = \mathcal{P}_{\mathrm{fin}}(S \times A)$ denote the set of finite
sets of \emph{pending requests}.  The \emph{escalation-extended state space}
is $\hat{S} = S \times \Req$.  A state $\hat{s} = (s, P) \in \hat{S}$
encodes the current system state $s$ together with the set $P$ of pending
(state, action) pairs awaiting supervisor resolution.
\end{definition}

\begin{definition}[Escalation Transition]
\label{def:esctrans}
When $F(s, a) = (\Escalate, \hat{s}')$, the system transitions to
\[
  \hat{s}' = \bigl(s,\; P \cup \{(s, a)\}\bigr),
\]
recording the pending request $(s, a)$ with the originating state $s$, while
leaving the system state unchanged.  No transition of $T$ fires.
\end{definition}

\begin{definition}[Supervisor Resolution Function]
\label{def:resolve}
The \emph{resolution function} $\mathrm{resolve} : \hat{S} \times (S \times A)
\times \{{\Allow, \Refuse}\} \to \hat{S}$ is defined, for
$(s_{\mathrm{orig}}, a) \in P$ (precondition: the request being resolved must
be in the pending set), by:
\begin{align}
  \mathrm{resolve}\bigl((s_t, P),\; (s_{\mathrm{orig}}, a),\; \Allow\bigr)
    &= \bigl(T(s_t, a),\; P \setminus \{(s_{\mathrm{orig}}, a)\}\bigr)
    \label{eq:res-allow}\\
  \mathrm{resolve}\bigl((s_t, P),\; (s_{\mathrm{orig}}, a),\; \Refuse\bigr)
    &= \bigl(s_t,\; P \setminus \{(s_{\mathrm{orig}}, a)\}\bigr)
    \label{eq:res-refuse}
\end{align}
where $s_t$ is the \emph{current} system state at resolution time, which may
differ from $s_{\mathrm{orig}}$.  The function is undefined when
$(s_{\mathrm{orig}}, a) \notin P$; resolution of a request not in the pending
set is a protocol error.
\end{definition}

Two observations are immediate.  First, when $\Allow$ is issued at resolution,
$T$ fires in state $s_t$, not $s_{\mathrm{orig}}$.  The supervisor therefore
cannot rely on the admissibility evaluation performed in $s_{\mathrm{orig}}$;
it must re-evaluate $\Adm(s_t, a)$.  Second, the resolution function
itself is a decision point: it evaluates a disposition ($\Allow$ or $\Refuse$)
and potentially fires a transition ($T(s_t, a)$).  This is precisely the
structure of a decision boundary.

\begin{remark}[The supervisor as a second decision system]
\label{rem:supervisor}
The $\Escalate$ outcome introduces a second decision point operated by an
external supervisor, which raises a natural question: is this the same system,
or a different one?  The model is agnostic.  The supervisor may be a human
reviewer, an automated policy agent, or a recursive invocation of the same
governance pipeline.  What Corollary~\ref{cor:escalate} (to follow) establishes
is independent of this: \emph{wherever} the supervisor sits, its resolution
function $\mathrm{resolve}$ constitutes a decision boundary of its own, and
the same structural argument that applies to the primary boundary applies
there.  The escalation model does not solve the atomicity problem by deferral;
it makes explicit that the problem \emph{cannot} be avoided by deferral, only
transferred.
\end{remark}

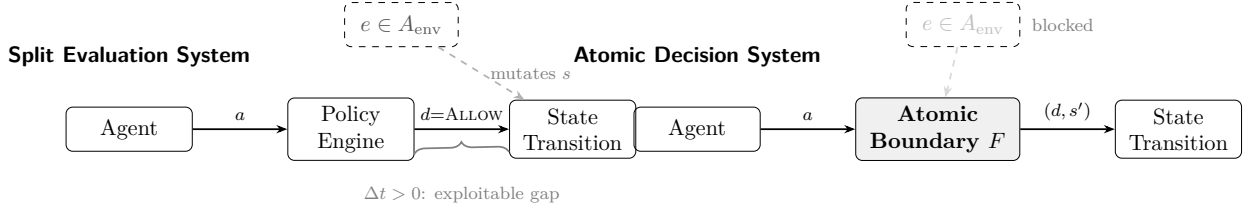
\begin{figure}[t]
\centering
\resizebox{\textwidth}{!}{%
\begin{tikzpicture}[
  node distance = 1.0cm and 1.5cm,
  box/.style    = {draw, rounded corners=3pt, minimum width=2.0cm,
                   minimum height=0.7cm, align=center, font=\small},
  envbox/.style = {draw, dashed, rounded corners=3pt, minimum width=1.8cm,
                   minimum height=0.7cm, align=center, font=\small,
                   text=gray!70!black},
  arr/.style    = {-{Stealth[length=5pt]}, thick},
  darr/.style   = {-{Stealth[length=5pt]}, thick, dashed, gray!60},
  label/.style  = {font=\small\bfseries}
]

\node[label] (splitlabel) {\textsf{Split Evaluation System}};

\node[box, below=0.5cm of splitlabel] (agent1)   {Agent};
\node[box, right=1.5cm of agent1]     (policy)   {Policy\\Engine};
\node[box, right=1.5cm of policy]     (db1)      {State\\Transition};
\node[envbox, above=0.8cm of policy,
      xshift=0.8cm]                   (env)      {$e \in A_{\mathrm{env}}$};

\draw[arr] (agent1) -- node[above,font=\scriptsize]{$a$} (policy);
\draw[arr] (policy) -- node[above,font=\scriptsize]{$d{=}\Allow$} (db1);
\draw[darr] (env)   -- node[right,font=\scriptsize,text=gray]{mutates $s$} (db1);

\draw[decorate, decoration={brace, amplitude=5pt, raise=3pt},
      gray, thick]
  (policy.south east) -- (db1.south west)
  node[midway, below=8pt, font=\scriptsize, gray]{$\Delta t > 0$: exploitable gap};

\begin{scope}[xshift=9.0cm]
\node[label] (atomlabel) {\textsf{Atomic Decision System}};

\node[box, below=0.5cm of atomlabel] (agent2) {Agent};

\node[box, right=1.5cm of agent2,
      minimum width=2.6cm,
      fill=black!6,
      label={[font=\tiny, text=black!50]south:single LTS step}]
      (kernel) {Atomic\\Boundary $F$};

\node[box, right=1.5cm of kernel] (db2) {State\\Transition};

\draw[arr] (agent2) -- node[above,font=\scriptsize]{$a$}    (kernel);
\draw[arr] (kernel) -- node[above,font=\scriptsize]{$(d,s')$} (db2);

\node[envbox, above=0.8cm of kernel,
      xshift=0.4cm, text=gray!50] (noenv) {$e \in A_{\mathrm{env}}$};
\draw[darr, gray!30] (noenv) -- (kernel);
\node[font=\scriptsize, gray, right=0.05cm of noenv] {blocked};
\end{scope}

\end{tikzpicture}}
\caption{\raggedright Architectural contrast between a split evaluation
system (left) and an atomic decision system (right).  In the split
system, environment actions $e\!\in\!A_{\mathrm{env}}$ may interleave
between the policy evaluation and the state transition, producing the
exploitable gap $\Delta t > 0$ proved in Theorem~\ref{thm:main}.
In the atomic system, $F$ determines both the decision and the resulting
state as a single indivisible LTS step; no environment action can
interleave within it.}
\label{fig:architecture}
\end{figure}

\section{The Failure of Split Systems}
\label{sec:failure}

Before stating the main theorem, we illustrate the failure mode concretely.

\begin{example}[Admissibility Violation in a Split System]
\label{ex:violation}
Return to Example~\ref{ex:running}.  The following three-step trace witnesses
the failure (steps correspond to those in the proof of Theorem~\ref{thm:main}):
\begin{enumerate}
  \item[\textbf{(1)}] The system is in state $s$ with $f \notin \mathit{locked}(s)$.
    The agent requests $\mathit{write}(f)$; the split system evaluates
    $D(s, \mathit{write}(f)) = \Allow$ and records the decision.
  \item[\textbf{(2)}] Before execution fires, another agent issues $\mathit{lock}(f)$,
    an environment action that transitions $s \to s^*$ with
    $f \in \mathit{locked}(s^*)$.
  \item[\textbf{(3)}] The original decision ($\Allow$) is already recorded.  The
    execution transition $\mathit{exec}(\mathit{write}(f))$ fires in $s^*$,
    applying $T(s^*, \mathit{write}(f))$.
\end{enumerate}
At the moment of execution, $\Adm(s^*, \mathit{write}(f)) = \mathsf{false}$:
the file is locked.  The split system has executed an inadmissible transition.

In contrast, an atomic system would evaluate $F(s^*, \mathit{write}(f))$ at
the moment of the transition and obtain $(\Refuse, s^*)$, preventing the
inadmissible execution.
\end{example}

The failure in Example~\ref{ex:violation} cannot be remedied by:
\begin{itemize}
  \item \textbf{Richer policies.}  The decision function $D$ already
    produces the correct result for every state it is given.  The problem is
    that it is given $s$, not $s^*$.

  \item \textbf{External state.}  Adding an external store (e.g., a shared
    lock registry read by $D$) reduces the window in which the gap can be
    exploited but does not reduce it to zero.  The store must be read before
    $T$ fires; between that read and the firing of $T$, the store can be
    updated.

  \item \textbf{Re-evaluation before execution.}  If the system re-evaluates
    $D$ immediately before firing $T$, the re-evaluation and the execution are
    themselves two separate transitions---which is exactly the structure of a
    split system.  Unless the re-evaluation and the execution are jointly
    atomic, the gap is merely shifted, not closed.
\end{itemize}

\section{Main Results}
\label{sec:results}

\subsection{Non-Equivalence Theorem}

We now prove the central result.  Throughout this section,
\emph{equivalence} between two systems is taken with respect to
\emph{admissibility preservation} (Definition~\ref{def:admpres}): two systems
are equivalent if and only if they agree on which execution traces preserve
admissibility under all environments satisfying Assumptions~\ref{asm:nontrivial}
and~\ref{asm:env}.  The theorem is stated for the general case; the proof is
constructive via the execution trace in Example~\ref{ex:violation}.

\begin{theorem}[Non-Equivalence of Split and Atomic Systems]
\label{thm:main}
Let $M_{\mathrm{split}}$ be any split evaluation system and let
$M_{\mathrm{atom}}$ be any atomic decision boundary system, both defined
over the same state space $S$, action set $A$, environment actions
$A_{\mathrm{env}}$, admissibility predicate $\Adm$, and transition function
$T$.  Under Assumptions~\ref{asm:nontrivial} and~\ref{asm:env}, there exists an execution trace
$\sigma^* \in \Traces(M_{\mathrm{split}})$ such that:
\begin{enumerate}[label=(\roman*)]
  \item $\sigma^*$ contains a transition $s^* \xrightarrow{\mathit{exec}(a)}
    T(s^*, a)$ with $\Adm(s^*, a) = \mathsf{false}$.
  \item No corresponding trace exists in $\Traces(M_{\mathrm{atom}})$ that
    commits $T(\cdot, a)$ in a state where $\Adm(\cdot, a) = \mathsf{false}$.
\end{enumerate}
Consequently, $M_{\mathrm{split}}$ does not preserve admissibility, while
$M_{\mathrm{atom}}$ does.
\end{theorem}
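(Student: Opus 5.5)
The proof is constructive: we build the witness trace $\sigma^*$ in $M_{\mathrm{split}}$ directly from the two halves of Assumption~\ref{asm:nontrivial}, mirroring Example~\ref{ex:violation}. The second half follows from the indivisibility axiom of Definition~\ref{def:atomic} together with consistency of $D$ with $\Adm$ (Definition~\ref{def:decfun}).

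\textbf{Step 1: choose the ingredients.} From Assumption~\ref{asm:nontrivial}(i) take $s, a$ with $\Adm(s,a)=\mathsf{true}$ and $D(s,a)=\Allow$. From Assumption~\ref{asm:nontrivial}(ii) take an environment action $e$ with $s \xrightarrow{e} s^*$ and $\Adm(s^*,a)=\mathsf{false}$. The two clauses are stated with independent witnesses, and this is the main obstacle: (i) and (ii) must be realized at the same pair $(s,a)$.

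The approach is to read the non-triviality assumption as jointly instantiable, since the running example satisfies it with $s$ unlocked and $e=\mathit{lock}(f)$. Failing that, we strengthen the hypothesis explicitly to: there exist $s,a,e$ with $D(s,a)=\Allow$ and $s\xrightarrow{e}s^*$, $\neg\Adm(s^*,a)$. We would also note that $e$ must be enabled at the recorded state $s_D$. Since $s_D$ encodes $s$ plus the recorded disposition, we take environment actions to act on the underlying $S$-component, so $s_D \xrightarrow{e} s^*_D$ with underlying state $s^*$.

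\textbf{Step 2: build $\sigma^*$.} Apply Definition~\ref{def:split} with $k=1$:
\[
\sigma^* = s \xrightarrow{\mathit{dec}(a)} s_D \xrightarrow{e} s^* \xrightarrow{\mathit{exec}(a)} T(s^*,a).
\]
Each step is legitimate for a specific reason:
\begin{itemize}
\item The decision step records $d=D(s,a)=\Allow$.
\item The environment step may be inserted by Assumption~\ref{asm:env}(ii), with no barrier, and cannot be blocked by Assumption~\ref{asm:env}(i).
\item The execution step is enabled because it is contingent only on the recorded $d=\Allow$.
\item By Assumption~\ref{asm:env}(iii), $D$ was evaluated at $s$ and had no access to $s^*$, so the recorded disposition is not revised.
\end{itemize}
Hence $\sigma^*\in\Traces(M_{\mathrm{split}})$, and it contains an $\mathit{exec}(a)$ transition from $s^*$ with $\Adm(s^*,a)=\mathsf{false}$. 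This proves (i), and by Definition~\ref{def:admpres} $M_{\mathrm{split}}$ does not preserve admissibility.

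\textbf{Step 3: prove (ii) and preservation for $M_{\mathrm{atom}}$.} By Definition~\ref{def:atomic}, every commitment of $T(\cdot,a)$ in $M_{\mathrm{atom}}$ occurs on a single arc $\hat s\xrightarrow{F(\hat s,a)}\hat s'$ whose source state is the state where $D$ is evaluated. No intermediate state exists at which $e$ could fire. The arc commits $T$ only when $D(\hat s,a)=\Allow$, and consistency (Definition~\ref{def:decfun}(i)) then gives $\Adm(\hat s,a)$.

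In the situation of Step 1, once $e$ has fired the atomic system is in $s^*$. There $D(s^*,a)\neq\Allow$ by Definition~\ref{def:decfun}(iv), so $F$ yields $\Refuse$ or $\Escalate$ and no $T$ fires. This gives (ii). Quantifying over all arcs of all traces gives admissibility preservation for $M_{\mathrm{atom}}$.

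One caveat concerns escalation resolutions, which also fire $T$. For this part we would restrict "commits $T(\cdot,a)$" to primary $F$-arcs, or assume resolution is atomic, and defer the general case to Corollary~\ref{cor:escalate}.
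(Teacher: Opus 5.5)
Your proposal is correct and follows essentially the same route as the paper: you build the identical witness trace $s \xrightarrow{\mathit{dec}(a)} s_D \xrightarrow{e} s^* \xrightarrow{\mathit{exec}(a)} T(s^*,a)$ from Assumption~\ref{asm:nontrivial}, and you make the same appeal to consistency of $D$ (conditions (i) and (iv) of Definition~\ref{def:decfun}) to show the atomic system never commits $T$ in an inadmissible state. The paper's proof passes silently over the caveats you raise: that the two clauses of Assumption~\ref{asm:nontrivial} must hold at a common pair $(s,a)$, that $e$ must be enabled at $s_D$, and that escalation resolutions also fire $T$. Handling them explicitly, together with your all-arcs argument for admissibility preservation in $M_{\mathrm{atom}}$, tightens the argument rather than changing it.
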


\begin{proof}
By Assumption~\ref{asm:nontrivial}(i), there exist $s \in S$ and $a \in A$
with $\Adm(s, a) = \mathsf{true}$.  By Assumption~\ref{asm:nontrivial}(ii),
there exists $e \in A_{\mathrm{env}}$ and $s^* \in S$ with $s \xrightarrow{e}
s^*$ and $\Adm(s^*, a) = \mathsf{false}$.

\paragraph{Constructing $\sigma^*$.}
Consider the following trace in $M_{\mathrm{split}}$:
\[
  \sigma^* :\quad
  s \;\xrightarrow{\mathit{dec}(a)}\; s_D
    \;\xrightarrow{e}\; s^*
    \;\xrightarrow{\mathit{exec}(a)}\; T(s^*, a).
\]
\begin{enumerate}
  \item The decision transition $\mathit{dec}(a)$ fires in state $s$.  By
    Assumption~\ref{asm:nontrivial}(i), $D(s, a) = \Allow$.  The system
    records $\Allow$ and transitions to $s_D$ (encoding the recorded decision
    alongside the current state; the underlying system state is still $s$).

  \item The environment transition $e$ fires, producing $s^*$ with
    $\Adm(s^*, a) = \mathsf{false}$.

  \item The execution transition $\mathit{exec}(a)$ fires in $s^*$.  In
    $M_{\mathrm{split}}$, execution is triggered by the recorded disposition
    $\Allow$, which was evaluated in $s$.  The split system applies
    $T(s^*, a)$, yielding the inadmissible transition.
\end{enumerate}

By construction, $\sigma^* \in \Traces(M_{\mathrm{split}})$ (all three
transitions are valid in a split system), and the transition at step~(3) is
inadmissible in $s^*$.  This establishes~(i).

\paragraph{Atomic system behavior.}
In $M_{\mathrm{atom}}$, the agent action $a$ corresponds to a single
indivisible LTS transition $F(s', a)$ from whatever state $s'$ holds at the
moment the agent acts.

If the agent acts \emph{before} the environment transition $e$ fires (system
in state $s$): $F(s, a) = (\Allow, T(s, a))$.  The transition completes
atomically; $e$ cannot interleave \emph{within} the single LTS step.  The
resulting state is $T(s, a)$, produced from $s$ where $\Adm(s, a)$ holds.

If the agent acts \emph{after} $e$ fires (system in state $s^*$):
$D(s^*, a) \in \{\Refuse, \Escalate\}$, since $\Adm(s^*, a) = \mathsf{false}$
and $D$ is consistent with $\Adm$ (condition~(iv): $\neg\Adm \Rightarrow D \neq \Allow$).
In either case, no transition of $T$ fires: $\Refuse$ keeps the state unchanged,
and $\Escalate$ records the pending request without firing $T$
(Definition~\ref{def:esctrans}).

In neither case does $M_{\mathrm{atom}}$ commit $T(\cdot, a)$ in a state
where admissibility fails.  This establishes~(ii).

\paragraph{Conclusion.}
$M_{\mathrm{split}}$ produces an inadmissible transition under $\sigma^*$;
$M_{\mathrm{atom}}$ does not.  Therefore $M_{\mathrm{split}}$ does not satisfy
Definition~\ref{def:admpres} (admissibility preservation) under all execution
traces.  No construction applied to $M_{\mathrm{split}}$ can restore
admissibility preservation without introducing an atomic boundary: any
mechanism that prevents the violation in $\sigma^*$ must either
(a) block environment actions from interleaving between $\mathit{dec}(a)$ and
$\mathit{exec}(a)$---which is precisely Definition~\ref{def:atomic}---or
(b) re-evaluate $D$ at execution time jointly with $T$ as a single
indivisible step---which is again Definition~\ref{def:atomic}.  In both cases,
the fix is the atomic boundary condition itself.
\end{proof}

\begin{remark}[Scope and boundary conditions of the impossibility]
\label{rem:scope}
Theorem~\ref{thm:main} is a conditional result: it holds under
Assumptions~\ref{asm:nontrivial} and~\ref{asm:env}.  In a \emph{sequential}
system where no concurrent environment actions occur ($k = 0$ always in
Definition~\ref{def:split}), the trace $\sigma^*$ is unreachable and the
gap cannot be exploited.  Similarly, if mutual exclusion is enforced at the
operating-system or runtime level over the \emph{entire} interval between
$\mathit{dec}(a)$ and $\mathit{exec}(a)$, the split system becomes effectively
atomic within that exclusion scope---but this is precisely an implementation
of the atomic boundary (see Remark~\ref{rem:impl}).

Two common objections deserve direct answers.  \emph{First}: ``What about
locks, transactions, CAS, or linearizable storage?''  These are valid
implementation mechanisms for the atomic boundary; they are not alternatives
to it.  A system that wraps both the decision call and the execution call inside
a single database transaction, or acquires a lock before $\mathit{dec}(a)$ and
releases it only after $\mathit{exec}(a)$, satisfies Definition~\ref{def:atomic}
by construction---because no environment action can fire between the two.
\emph{Second}: ``Isn't this just data atomicity?''  No.  Data atomicity (2PC,
MVCC) ensures that a data record is fully written or not written at all.
Admissibility atomicity ensures that the \emph{policy check} and the
\emph{data write} are evaluated against the same system-state snapshot.
A 2PC-coordinated write preceded by an OPA policy call is still a split system
in our model (\S\ref{sec:dist}).
\end{remark}

\subsection{Corollaries}

\begin{corollary}[Impossibility of Execution-Time Guarantees in Split Systems]
\label{cor:impossibility}
A split evaluation system cannot guarantee admissibility at execution time
under all execution traces.
\end{corollary}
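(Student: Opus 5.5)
The corollary follows essentially directly from Theorem~\ref{thm:main}(i), so the plan is to unpack what ``guarantee admissibility at execution time under all execution traces'' means and show that the witness trace $\sigma^*$ violates it. I would read the guarantee as exactly Definition~\ref{def:admpres}, specialized to the execution transitions of a split system. For every trace $\sigma \in \Traces(M_{\mathrm{split}})$ and every occurrence $s_1 \xrightarrow{\mathit{exec}(a)} T(s_1,a)$ in $\sigma$, we need $\Adm(s_1,a)=\mathsf{true}$. The proof will then be a one-step contrapositive: exhibit a single trace where this fails.

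\textbf{Step 1.} Fix an arbitrary split evaluation system $M_{\mathrm{split}}$ satisfying Assumptions~\ref{asm:nontrivial} and~\ref{asm:env}. Pair it with any atomic system over the same $S$, $A$, $A_{\mathrm{env}}$, $\Adm$, $T$ so that Theorem~\ref{thm:main} applies. Only part~(i) is needed, so the choice of atomic system is immaterial. If one worries that no atomic system is available, I would define one directly via $F$ from Definition~\ref{def:atomic} using the same $D$ and $T$.

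\textbf{Step 2.} Invoke Theorem~\ref{thm:main}(i) to obtain $\sigma^* \in \Traces(M_{\mathrm{split}})$ containing $s^* \xrightarrow{\mathit{exec}(a)} T(s^*,a)$ with $\Adm(s^*,a)=\mathsf{false}$. Concretely, this is the trace $s \xrightarrow{\mathit{dec}(a)} s_D \xrightarrow{e} s^* \xrightarrow{\mathit{exec}(a)} T(s^*,a)$ built from Assumption~\ref{asm:nontrivial}(i)--(ii). Its realizability in the split system is justified by Assumption~\ref{asm:env}(i)--(ii): $e$ cannot be blocked and may interleave between $\mathit{dec}(a)$ and $\mathit{exec}(a)$.

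\textbf{Step 3.} Conclude that the universally quantified guarantee fails on $\sigma^*$, so $M_{\mathrm{split}}$ cannot guarantee admissibility at execution time under all traces. Since $M_{\mathrm{split}}$ was arbitrary, the claim holds for every split evaluation system.

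The only real obstacle is a subtlety in Theorem~\ref{thm:main}'s construction. Assumption~\ref{asm:nontrivial} supplies the admissible pair $(s,a)$ with $D(s,a)=\Allow$ in clause~(i), and the environment action $e$ making $a$ inadmissible in clause~(ii). These need not refer to the same $s$ and $a$. I would handle this by reading Assumption~\ref{asm:nontrivial} as jointly instantiable, as the theorem's proof implicitly does and as the running Example~\ref{ex:running} satisfies: with $s$ unlocked, $D(s,\mathit{write}(f))=\Allow$, and $\mathit{lock}(f)$ applies from any state. I would state this reading explicitly. A second, minor point is that $s_D$ carries the same underlying state $s$, so $e$ fires from $s$. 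I would note that Definition~\ref{def:split} lets $e$ act on the system component of $s_D$.
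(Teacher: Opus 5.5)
Your proposal is correct and matches the paper's own proof: the corollary follows directly from Theorem~\ref{thm:main}(i), with $\sigma^*$ as the single counterexample trace that falsifies the universally quantified guarantee. Your explicit remarks make precise two steps that the paper's proof of Theorem~\ref{thm:main} takes silently: that Assumption~\ref{asm:nontrivial}(i) and (ii) must be instantiated at the same pair $(s,a)$, and that $e$ must be enabled from $s_D$ (via Assumption~\ref{asm:env}(ii)). Reading the assumption as jointly instantiable does strengthen it beyond its literal wording, but the paper relies on the same reading, so this is not a gap relative to its argument.
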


\begin{proof}
Immediate from Theorem~\ref{thm:main}: the trace $\sigma^*$ is a
constructive witness to a violation.  Since $\sigma^*$ is a valid trace of
any split system satisfying Assumption~\ref{asm:nontrivial}, no such system
can claim universal admissibility preservation.
\end{proof}

\begin{corollary}[External State Does Not Restore Atomicity]
\label{cor:extstate}
Augmenting a split system with access to external state does not restore
equivalence with an atomic system unless evaluation and the resulting
state transition are jointly enforced as a single indivisible step.
\end{corollary}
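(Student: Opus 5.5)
The plan is to reduce Corollary~\ref{cor:extstate} to Theorem~\ref{thm:main}. I will model an ``external state'' augmentation as a split system that reads an auxiliary component. Concretely, I take states to be pairs $(s, x)$ with $x \in X$ an external store. The decision function becomes $D'(s, x, a)$, and the environment may update $x$ as well as $s$.

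The key observation is that the augmented system is still split in the sense of Definition~\ref{def:split}, unless the read of $x$, the evaluation of $D'$, and the firing of $T$ form a single arc. The decision transition $\mathit{dec}(a)$, which now includes the read of $x$, and the execution transition $\mathit{exec}(a)$ remain distinct LTS arcs with an intermediate state $s_D$ between them.

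The steps are as follows.
\begin{enumerate}
  \item Lift $\Adm$ to the augmented state space by ignoring $x$, or more generally by letting it depend on $x$. Then check that Assumptions~\ref{asm:nontrivial} and~\ref{asm:env} still hold. For (i), choose $x$ consistent with the witness state from the original system; the requirement is that $D'$ is consistent with $\Adm$, so it still grants $\Allow$ on some admissible pair. For (ii), take the same environment action $e$, leaving $x$ unchanged or updating it arbitrarily.
  \item Invoke Assumption~\ref{asm:env}(iii). The value of $x$ read at $\mathit{dec}(a)$ reflects only environment actions that have already fired, so $D'$ cannot anticipate $e$.
  \item Rebuild the trace $\sigma^*$ from the proof of Theorem~\ref{thm:main}:
  \[
    (s,x) \xrightarrow{\mathit{dec}(a)} s_D \xrightarrow{e} (s^*, x') \xrightarrow{\mathit{exec}(a)} T(s^*, a).
  \]
  Assumption~\ref{asm:env}(ii) makes this a valid interleaving, and it violates admissibility exactly as before.
  \item Handle the converse, the ``unless'' clause. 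If the read, the decision, and the transition are jointly one indivisible step, then the system literally satisfies Definition~\ref{def:atomic} with $F(\cdot,a)$ acting on the augmented state. Part (ii) of Theorem~\ref{thm:main} then gives admissibility preservation.
\end{enumerate}

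I expect the main obstacle to be making ``augmenting with external state'' precise enough that no clever store escapes the argument. The worry is an $x$ that is itself a lock or reservation that the environment must respect. My answer is that any such $x$ which actually prevents $e$ from firing between $\mathit{dec}(a)$ and $\mathit{exec}(a)$ contradicts Assumption~\ref{asm:env}(i). Either that, or it converts the interval into an indivisible step, which is the excluded atomic case, in line with Remark~\ref{rem:scope}. I would state this dichotomy explicitly as the case split that closes the proof.
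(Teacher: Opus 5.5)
Your proposal follows the paper's own argument: enrich the decision function to $D'$ over the system state plus an external component, observe that $\mathit{dec}(a)$ and $\mathit{exec}(a)$ remain distinct arcs with $e$ free to update either component in between, and replay $\sigma^*$. Your explicit treatment of the ``unless'' direction and of lock-like stores (which the paper leaves to Remark~\ref{rem:scope}) is a welcome addition. One small correction in step~1: consistency of $D'$ with $\Adm$ does not imply that $D'$ returns $\Allow$ on some admissible pair, because condition~(iii) of Definition~\ref{def:decfun} only excludes $\Refuse$, and an augmented $D'$ that returns $\Escalate$ everywhere is consistent. So Assumption~\ref{asm:nontrivial}(i) must be imposed on the augmented system as a hypothesis rather than derived; the corollary implicitly does this by inheriting the theorem's assumptions.
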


\begin{proof}
External state enriches the domain of the decision function $D$: let
$D' : S \times A \times \mathcal{E} \to \Dset$ where $\mathcal{E}$ is the
external state.  The augmented system still decomposes into a decision
transition (invoking $D'$ with the external state read at that moment) and a
subsequent execution transition.  The environment action $e$ in $\sigma^*$ may
act on the external state or the system state between these transitions; the
structural gap is unchanged.  Formally, $D'(s, a, \epsilon)$ with
$\epsilon = \epsilon_s$ (the external state corresponding to $s$) may return
$\Allow$, while by the time $T$ fires the external state has evolved to
$\epsilon_{s^*}$ reflecting the lock acquired by $e$.  The gap in
$\sigma^*$ is not closed.  Atomic coupling of $D'$ and $T$ is required.
\end{proof}

\begin{corollary}[Admissibility is a Property of Execution]
\label{cor:execprop}
Admissibility cannot be reduced to a property of evaluation alone; it is a
property of the state at the moment the transition is committed.
\end{corollary}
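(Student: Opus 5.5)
The plan is to formalize ``reducible to a property of evaluation alone'' and refute it with the witness trace of Theorem~\ref{thm:main}. Say that admissibility is \emph{evaluation-reducible} if, for every trace $\sigma$ and every agent action $a$ committed in $\sigma$, the truth of $\Adm$ at the commit point is determined by the evaluation record alone. The evaluation record is the pair $(s_{\eval}, D(s_{\eval}, a))$ stored in $s_D$ (Definition~\ref{def:split}). Put differently, there is a function $g$ with $\Adm(s_{\exec}, a) = g(s_{\eval}, a, D(s_{\eval},a))$ for all such commits. I will show that no such $g$ exists, and conclude that $\Adm$ can only be read off the state $s_{\exec}$ at the commit point.

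The first step takes $s, a$ from Assumption~\ref{asm:nontrivial}(i), so $\Adm(s,a)=\mathsf{true}$ and $D(s,a)=\Allow$. I then need an $e$ with $s \xrightarrow{e} s^*$ and $\Adm(s^*,a)=\mathsf{false}$ from that same state $s$. Assumption~\ref{asm:nontrivial}(ii) literally gives this only for some possibly different pair, so this is the delicate point. I will take the same identification the proof of Theorem~\ref{thm:main} already makes, which is justified by Assumption~\ref{asm:env}(ii): environment actions can fire anywhere, and the running Example~\ref{ex:running} realizes it with $e = \mathit{lock}(f)$. Next I compare two split traces that share the identical evaluation record $(s, \Allow)$. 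The first is $s \xrightarrow{\mathit{dec}(a)} s_D \xrightarrow{\mathit{exec}(a)} T(s,a)$ with $k=0$, where the commit is admissible. The second is the trace $\sigma^*$ of Theorem~\ref{thm:main} with $k=1$, where the commit is inadmissible. Both are traces of $M_{\mathrm{split}}$ by Definition~\ref{def:split} and Assumption~\ref{asm:env}(ii).

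The decisive step is that $g(s,a,\Allow)$ would have to equal both $\mathsf{true}$ and $\mathsf{false}$. Assumption~\ref{asm:env}(iii) excludes the escape of letting $D$ ``see'' $e$ in advance, so the record cannot distinguish the two traces. Hence no evaluation-only characterization exists. I will add the same remark as in Corollary~\ref{cor:extstate}: enlarging the record with external state $\epsilon_s$ does not help, because both traces can share $\epsilon_s$ too. For the positive half, I will cite part (ii) of Theorem~\ref{thm:main}. In $M_{\mathrm{atom}}$ the evaluation state and the commit state coincide by Definition~\ref{def:atomic}, so there $\Adm(s_{\exec},a)$ is exactly what $F$ checks, and admissibility is indeed a property of the commit state.

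I expect the main obstacle to be making ``reduced to a property of evaluation'' precise, since the corollary is stated informally. The counterexample itself is routine. I will try the functional-dependence formulation above first. If it seems too narrow, the fallback is a trace-property formulation: any predicate on traces that depends only on $\mathit{dec}$ events is closed under inserting environment actions before $\mathit{exec}$, while admissibility preservation (Definition~\ref{def:admpres}) is not closed under that insertion, as the pair of traces above shows.
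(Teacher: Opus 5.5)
Your proposal is correct and follows the paper's argument. The paper simply observes that in $\sigma^*$ the same action $a$ is admissible at the evaluation state $s$ but inadmissible at the commit state $s^*$; your functional-dependence formulation, with the companion $k=0$ trace sharing the record $(s,\Allow)$, is a sharper version of that same witness. One small correction: Assumption~\ref{asm:env}(ii) only says \emph{when} environment actions may fire, not that a harmful $e$ is enabled at the particular $s$ of Assumption~\ref{asm:nontrivial}(i), so it does not license using the same $(s,a)$ in both parts of Assumption~\ref{asm:nontrivial}. Like the paper, you are effectively reading that assumption as holding jointly, as it does in Example~\ref{ex:running}.
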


\begin{proof}
In the trace $\sigma^*$, the same action $a$ is admissible in state $s$
(evaluation state) and inadmissible in state $s^*$ (execution state).  Since
execution commits the transition in $s^*$, admissibility must be determined
with respect to $s^*$.  Any governance guarantee that is computed only at
$s$ is therefore insufficient.
\end{proof}

\subsection{The Escalation Closure Requirement}

\begin{corollary}[Escalation Closure Requirement]
\label{cor:escalate}
An $\Escalate$ outcome preserves the atomic decision boundary guarantee if
and only if the supervisor resolution function $\mathrm{resolve}$ is itself
atomic: the evaluation of $\Adm(s_t, a)$ at resolution time and the
application of $T(s_t, a)$ must be jointly determined as a single indivisible
step.
\end{corollary}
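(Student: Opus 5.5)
I will prove the two directions separately, treating the resolution step as a decision boundary in its own right (Remark~\ref{rem:supervisor}) so that Theorem~\ref{thm:main} can be reused with $\mathrm{resolve}$ in place of the primary pair $(D,T)$. The first step is to make precise what ``preserves the atomic decision boundary guarantee'' means. I take it to mean that every trace of the escalation-extended system over $\hat S$ satisfies Definition~\ref{def:admpres}. Concretely, every firing of $T(\cdot,a)$ must occur in a state where $\Adm(\cdot,a)$ holds. This covers both firings at the primary boundary $F$ and firings inside $\mathrm{resolve}$ via~\eqref{eq:res-allow}.

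\emph{Sufficiency (if).} Suppose resolution is atomic. That is, there is a single LTS arc from $(s_t,P)$ that computes a disposition $d'\in\{\Allow,\Refuse\}$ from $\Adm(s_t,a)$ (or from a decision function consistent with $\Adm$ restricted to two values) and applies~\eqref{eq:res-allow} or~\eqref{eq:res-refuse} accordingly. I split the check by the origin of the $T$-firing.
\begin{itemize}
\item Firings produced by $F$ with $d=\Allow$ are admissible by Definition~\ref{def:decfun}(i), exactly as in part~(ii) of the proof of Theorem~\ref{thm:main}.
\item The $\Escalate$ and $\Refuse$ arcs fire no $T$ (Definitions~\ref{def:esctrans} and~\ref{def:atomic}).
\item A resolution arc with $\Allow$ fires $T(s_t,a)$ in the very state $s_t$ in which $\Adm(s_t,a)$ was evaluated. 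Indivisibility means no environment action intervenes, so that firing is admissible too.
\end{itemize}
An induction on trace length then gives admissibility preservation for the whole system.

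\emph{Necessity (only if).} Suppose resolution is split: the supervisor's evaluation of $\Adm(s_t,a)$ and the application of $T$ are separate arcs. I build a counterexample trace mirroring $\sigma^*$. Start from a state $s$ with $\Adm(s,a)$ true from which $e$ reaches $s^*$ with $\Adm(s^*,a)$ false; this pair exists by Assumption~\ref{asm:nontrivial}(ii). I also need a state where $D$ escalates. Reaching the pending set is via an $\Escalate$ arc at some $s_{\mathrm{orig}}$, and the resolution then proceeds as follows:
\begin{enumerate}
\item the supervisor evaluates at $(s,P)$ and records $\Allow$;
\item $e$ fires, which Assumption~\ref{asm:env}(i)--(ii) permits, giving $(s^*,P)$;
\item the execution arc applies~\eqref{eq:res-allow}, producing $T(s^*,a)$ with $\Adm(s^*,a)$ false.
\end{enumerate}
This is precisely the decomposition of Definition~\ref{def:split} with $\mathrm{resolve}$ as the execution transition. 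The argument of Theorem~\ref{thm:main}(i) therefore applies verbatim, and Assumption~\ref{asm:env}(iii) rules out a supervisor that anticipates $e$.

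The main obstacle is that Assumption~\ref{asm:nontrivial} guarantees neither that $\Escalate$ is ever issued nor that $s_{\mathrm{orig}}$ leads to a state $s$ of the needed kind. The first thing I will try is to read the corollary as a statement about systems in which escalation actually occurs. I will add the hypothesis that some request $(s_{\mathrm{orig}},a)$ becomes pending and that from the current state $s_t$ an admissible-then-inadmissible environment step exists. Since $s_t$ is arbitrary relative to $s_{\mathrm{orig}}$, the simplest way to get this is to let environment actions move $s_t$ to the state $s$ of Assumption~\ref{asm:nontrivial}(ii) before resolution begins. If that reachability is unavailable, I will state it as an explicit non-triviality condition for the escalation layer. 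This is the analogue of Assumption~\ref{asm:nontrivial} for $\mathrm{resolve}$, and it keeps the ``only if'' direction honest.
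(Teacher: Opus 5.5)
Your proposal follows essentially the paper's route. For necessity, the paper also re-runs the $\sigma^*$ construction of Theorem~\ref{thm:main} on the resolution sub-process over $\hat S$, phrased as checking that Assumptions~\ref{asm:nontrivial} and~\ref{asm:env} carry over to it; for sufficiency, it likewise rests on the fact that an atomic $\mathrm{resolve}$ evaluates $\Adm(s_t,a)$ and fires $T$ in the same state $s_t$. The escalation-layer non-triviality condition you flag (a request for $a$ actually becomes pending, and the resolution-time state admits an admissible-then-inadmissible environment step) is what the paper's ``verification'' of Assumption~\ref{asm:nontrivial} for the sub-process tacitly assumes, so stating it explicitly tightens the same argument rather than departing from it. The condition is genuinely needed: contrary to how you first cite it, Assumption~\ref{asm:nontrivial}(ii) alone does not give $\Adm(s,a)=\mathsf{true}$ at the pre-$e$ state.
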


\begin{proof}
Let the system issue $\Escalate$ for action $a$ in state $s$, recording the
pending request $(s, a)$ per Definition~\ref{def:esctrans}.

\emph{(Necessity.)}  Suppose $\mathrm{resolve}$ is split: the supervisor
evaluates $D(s_t, a)$ in state $s_t$, and then fires $T$ separately.

We verify that Assumptions~\ref{asm:nontrivial} and~\ref{asm:env} hold for
the resolution sub-process operating on $\hat{S} = S \times \Req$.
Assumption~\ref{asm:nontrivial}(i) holds because $(s_t, a)$ has
$\Adm(s_t, a) = \mathsf{true}$ (otherwise the supervisor would not issue
$\Allow$) and $D(s_t, a) = \Allow$ (this is the split supervisor's decision).
Assumption~\ref{asm:nontrivial}(ii) holds because the system state $s_t$
(the first component of $\hat{S}$) can be modified by environment actions
between the supervisor's decision transition and the execution of $T$;
the same Assumption~\ref{asm:env} that applies to the primary system applies
to the projected $S$-component of $\hat{S}$.
Assumption~\ref{asm:env}(i)--(iii) carry over verbatim to the projected
$S$-component: environment actions are uncontrollable, may interleave
between the supervisor's evaluation and the firing of $T$, and the
supervisor observes only the current state $s_t$ at the time of evaluation.

By Theorem~\ref{thm:main} applied to this resolution sub-process, there
exists a trace in which the supervisor's $\Allow$ is computed in $s_t$ but
$T$ fires in $s^{**}$ with $\Adm(s^{**}, a) = \mathsf{false}$.
Thus, $\mathrm{resolve}$ admits an inadmissible transition.

\emph{(Sufficiency.)}  Suppose $\mathrm{resolve}$ is atomic: the evaluation
$D(s_t, a)$ and the application $T(s_t, a)$ are a single indivisible step in
the LTS.  Then by Definition~\ref{def:atomic} applied to $\mathrm{resolve}$,
the resolution satisfies the atomic decision boundary condition, and
Theorem~\ref{thm:main} guarantees no inadmissible transition at resolution.
\end{proof}

Corollary~\ref{cor:escalate} has an important practical consequence: the
$\Escalate$ outcome does not relax the atomicity obligation; it \emph{transfers}
it from the primary enforcement point to the supervisor.  A chain of escalations
each requiring their own atomic resolution corresponds precisely to the notion
of a supervisory hierarchy in discrete event systems~\cite{ramadge87}.

\section{Mapping to Existing Systems}
\label{sec:mapping}

Table~\ref{tab:mapping} classifies common governance mechanisms according to
the distinction established in \S\ref{sec:dbm}.  The classification is
structural and does not reflect comparisons of expressiveness or performance.

\begin{table}[h]
\centering
\caption{Structural classification of existing governance mechanisms.}
\label{tab:mapping}
\begin{tabular}{llp{6.8cm}}
\toprule
\textbf{System} & \textbf{Class} & \textbf{Basis for classification} \\
\midrule
RBAC~\cite{sandhu96}
  & Split
  & Permission check and execution are distinct operations \\[2pt]
ABAC~\cite{hru76}
  & Split
  & Attribute evaluation external to transition; attributes may change \\[2pt]
OPA~\cite{opa}
  & Split
  & Policy evaluation external to state transition \\[2pt]
Cedar~\cite{cedar}
  & Split
  & Authorization decision external to the enforcing runtime \\[2pt]
OPA + Redis
  & Split
  & External state enriches $D$; structural gap remains
    (Cor.~\ref{cor:extstate}) \\[2pt]
AWS IAM
  & Split
  & IAM evaluation precedes API call execution; no shared-snapshot
    guarantee \\[2pt]
Kubernetes Admission
  & Partially atomic
  & Webhook admits before object creation, but cross-resource state
    (e.g., a \texttt{ResourceQuota} modified concurrently) is not
    jointly atomic with the admission decision
    (Remark~\ref{rem:partial-atomic}) \\[2pt]
ACP~\cite{acp26}
  & Atomic
  & Decision and state mutation jointly enforced via single-use
    Execution Token \\
\bottomrule
\end{tabular}
\end{table}

\paragraph{Role-Based Access Control (RBAC).}
RBAC assigns permissions to roles and roles to principals.  At access time,
the system evaluates whether the requesting principal holds a role that
permits the requested operation.  The evaluation produces a disposition;
the operation is then executed by the application.  The RBAC system has no
mechanism to ensure that the role assignment---or the system state on which
the permission check depends---remains unchanged between evaluation and
execution.  RBAC implements:
\[
  d = D_{\mathrm{RBAC}}(s, a), \quad s' = T(s, a),
\]
as separate operations. RBAC is a split evaluation system.

\emph{Concrete trace.}  In state $s$, an agent holds the \textsc{Editor} role
and requests $\mathit{write}(r)$.  The RBAC check returns $d = \Allow$.
Before the write executes, an administrator revokes the role via environment
action $e$, transitioning to $s^*$ where the agent is no longer an
\textsc{Editor} and $\Adm(s^*, \mathit{write}(r)) = \mathsf{false}$.  The
recorded $\Allow$ nonetheless triggers execution of $T(s^*, \mathit{write}(r))$.
This is precisely the trace $\sigma^*$ of Theorem~\ref{thm:main}.

\paragraph{Open Policy Agent (OPA).}
OPA evaluates policies expressed in Rego against an input document.  The
evaluation produces a decision; execution is performed externally by the
calling system.  In its standard deployment model:
\[
  d = D_{\mathrm{OPA}}(s, a),
\]
with $T(s, a)$ executed outside the policy engine.  Even when OPA is
augmented with external data injected at query time (e.g., via the bundle API
or a data source), the execution transition remains external.  By
Corollary~\ref{cor:extstate}, this augmentation does not restore atomicity.
OPA is a split evaluation system.

\emph{Concrete trace.}  An OPA bundle is evaluated against input document
encoding state $s$ (e.g., current API quota usage $= q < q_{\max}$) and
returns $d = \Allow$.  Between the policy call and the application's write
execution, a concurrent request exhausts the remaining quota, producing $s^*$
with $\Adm(s^*, \mathit{write}) = \mathsf{false}$.  The application, holding
the recorded $\Allow$, proceeds.  Injecting a real-time Redis data source into
OPA narrows the window but does not close it: the Redis read is itself a
distinct LTS transition exposed to interleaving between $\mathit{dec}(a)$
and $\mathit{exec}(a)$ (Corollary~\ref{cor:extstate}).

\paragraph{ACP (Agent Control Protocol).}
ACP implements admission control as a stateful runtime mechanism in which
the decision and state mutation are jointly enforced at the execution
boundary~\cite{acp26}.  Each agent action must obtain an Execution Token---a
single-use cryptographic proof of admission---before the action is permitted
to proceed.  The token is issued and consumed atomically relative to the
ledger state, ensuring that the system state against which admissibility is
evaluated is the same state against which the transition commits.

ACP further implements the $\Escalate$ outcome as a pending-review state
tracked in the Audit Ledger (ACP-LEDGER-1.3), where subsequent resolution
passes through the same admission pipeline.  By Corollary~\ref{cor:escalate},
this satisfies the escalation closure requirement.

Formally:
\[
  F_{\mathrm{ACP}}(s, a) = (d, s'),
\]
where $d$ and $s'$ are determined by a single evaluate-then-mutate execution
contract~\cite{acp26}.  ACP is an atomic decision boundary system.

\paragraph{AWS IAM.}
AWS IAM evaluates authorization requests against a set of policies attached
to principals, resources, and conditions.  An IAM authorization call
(e.g., to the IAM policy simulator or to the underlying STS service)
produces a decision of $\Allow$ or $\Refuse$.  The resulting API call is
executed by the service (e.g., S3, EC2) independently of the IAM evaluation
context.  No shared-snapshot guarantee exists between the IAM policy
evaluation context and the service's execution context: condition keys
(e.g., \texttt{aws:CurrentTime}, \texttt{ec2:ResourceTag}) are evaluated
at IAM call time, while the resource state at execution time may differ.

\emph{Concrete trace.}  An agent requests an S3 write.  IAM evaluates the
policy in state $s$ where the bucket's ACL permits writes from the agent's
role, returning $d = \Allow$.  Before the write reaches S3, a concurrent
process changes the bucket policy (environment action $e$), producing $s^*$
where the write is inadmissible.  The S3 service executes $T(s^*, \mathit{write})$
regardless, since it received the IAM authorization token produced in $s$.
This is the trace $\sigma^*$ of Theorem~\ref{thm:main}.  AWS IAM is a split
evaluation system.

\paragraph{Cedar.}
Cedar is a policy language and evaluation engine designed for authorization
decisions in service-oriented architectures~\cite{cedar}.  Like OPA, Cedar
evaluates policies against entities (principal, action, resource, context)
and returns a decision.  The enforcement of that decision---the actual
execution of the requested operation---occurs outside Cedar, in the calling
application or service.  Cedar provides no mechanism to jointly enforce
its authorization decision with the subsequent state transition; the two
remain separate LTS transitions.  By Corollary~\ref{cor:extstate},
augmenting Cedar with an entity store does not restore atomicity: the
entity store is read at evaluation time, and the application may mutate
relevant entity attributes between Cedar's decision and the action's execution.
Cedar is a split evaluation system.

\begin{remark}[Partially atomic systems]
\label{rem:partial-atomic}
Table~\ref{tab:mapping} uses the term \emph{partially atomic} for
Kubernetes admission webhooks.  This denotes a system that satisfies
the atomic decision boundary condition (Definition~\ref{def:atomic}) for
a restricted subset of state---specifically, single-object lifecycle
operations---but fails atomicity when the relevant admissibility predicate
$\Adm$ depends on shared or cross-resource state that is not included in
the joint evaluation-and-execution snapshot.  Formally: let $S$ decompose
as $S = S_{\mathrm{local}} \times S_{\mathrm{global}}$.  If the system is
atomic with respect to $S_{\mathrm{local}}$ but split with respect to
$S_{\mathrm{global}}$, it is partially atomic.  When $\Adm$ depends only
on $S_{\mathrm{local}}$, the system provides full guarantees; when $\Adm$
depends on $S_{\mathrm{global}}$, Theorem~\ref{thm:main} applies to the
$S_{\mathrm{global}}$ component.
\end{remark}

\paragraph{Kubernetes Admission (Partially Atomic).}
Kubernetes admission webhooks intercept API server requests before the object
is written to etcd, providing a form of pre-execution evaluation.  Within a
single object's lifecycle, the webhook decision and the subsequent write to
etcd are reasonably coupled.  However, Kubernetes does not enforce atomicity
across resources: consider an agent Pod requesting a batch job that is
admissible given the current \texttt{ResourceQuota} state.  Between the
webhook's admission decision and the Pod's binding to a node, a concurrent
namespace operation updates the \texttt{ResourceQuota} to a value that would
have caused refusal.  The transition fires in an inadmissible cross-resource
state.  In terms of our model, the state $s$ relevant to $\Adm$ encodes
cluster-wide resource state; the admission decision is evaluated in $s$ but
the execution commits in a state $s^*$ that the webhook did not observe.  The
system is partially atomic (per-object) but split with respect to the full
admissibility predicate over shared cluster state.

\section{Discussion}
\label{sec:discussion}

\subsection{Relationship to Distributed Transactions}
\label{sec:dist}

Two-phase commit (2PC) and multiversion concurrency control (MVCC) achieve
atomicity for \emph{data transitions}: they ensure that a database record is
either fully updated or not updated at all.  This is a different plane from
the admissibility atomicity we define.

Admissibility is a predicate over the system state at transition time; its
enforcement requires that the policy check and the transition commit occur in
the same snapshot of that state.  2PC coordinates multiple data stores to
commit together; it does not ensure that the policy check preceding the
commit was evaluated in the same snapshot.  A system can run an OPA policy
check followed by a 2PC-coordinated database update and still be a split
system in our model: the OPA call and the 2PC commit are separate operations
with an exploitable window between them.

An atomic decision boundary requires that the policy evaluation and the state
commit are \emph{the same operation}---not merely that they are coordinated.

\paragraph{Relationship to linearizability.}
Linearizability~\cite{herlihy90} is the standard correctness criterion for
concurrent objects: an operation appears to take effect instantaneously at
some point between its invocation and its response.  This is a property of
\emph{data operations} on shared objects.  The atomic decision boundary
operates at a distinct level of abstraction: it is a property of
\emph{governance decisions} over state transitions.

A linearizable data store guarantees that each read or write appears atomic
in isolation; it does not guarantee that a policy check and the write it
governs were evaluated against the same state snapshot.  A system that runs a
policy evaluation followed by a linearizable write is still a split system in
our model: the policy call and the write are separate operations with an
exploitable window between them.

These are complementary guarantees, not competing ones.  A system achieves
the atomic decision boundary by implementing the policy check and the
resulting state transition as a \emph{single} linearizable operation---
precisely collapsing the gap.  Linearizability specifies what it means for a
single operation to be atomic; the atomic decision boundary specifies
\emph{which operations must be made atomic together}.

\subsection{What Atomicity Does Not Prescribe}

Our results characterize a \emph{necessary} structural condition for
admissibility guarantees.  They do not:
\begin{itemize}
  \item prescribe a specific implementation of the atomic boundary (hardware
    CAS, database transaction, protocol contract, or otherwise);
  \item claim that atomic systems are immune to all governance failures---
    behavioral drift operating below the enforcement threshold is addressed
    separately~\cite{iml26};
  \item argue that split systems have no valid use.  Split systems can
    provide probabilistic assurance, reduce latency, or serve in contexts
    where admissibility is defined conservatively.  The theorem establishes
    only that they cannot provide the universal guarantee.
\end{itemize}

\subsection{Concurrency Models}

The trace $\sigma^*$ in Theorem~\ref{thm:main} assumes an adversarial
environment: the environment action $e$ is assumed to be able to fire between
$\mathit{dec}(a)$ and $\mathit{exec}(a)$.  In a stochastic model, $e$ fires
with some probability $p > 0$; the guarantee degrades from universal to
probabilistic.  In a sequential model where no concurrency is possible, the
gap does not arise.  The relevance of Theorem~\ref{thm:main} is therefore
proportional to the degree of concurrency in the deployment environment.  For
modern multi-agent systems operating over shared state, $p > 0$ is the
realistic assumption.

\subsection{Relationship to Formal Verification}

TLA+ and other model-checking frameworks verify safety and liveness properties
of system models.  In a TLA+ model of a split system, the gap in $\sigma^*$
can be made explicit: if environment actions are part of the model, the
checker will find traces that violate admissibility invariants.  This paper
provides the structural characterization that motivates where to look and why
no refinement of the split architecture can eliminate the counterexample trace.
The ACP specification~\cite{acp26} is model-checked with TLC and verifies
safety invariants that hold precisely because ACP enforces an atomic boundary.

\section{Related Work}
\label{sec:related}

\paragraph{Time-of-Check/Time-of-Use (TOCTOU).}
The TOCTOU pattern---in which a security check and the operation it governs
are separated in time---has been studied extensively in operating systems
security~\cite{bishop96, tsafrir08}.  The classic form involves a file system
race: a process checks file permissions and then opens the file; an attacker
replaces the file between check and open.  Our model generalizes TOCTOU from
OS primitives to governance systems, with three key extensions.

First, we operate at the level of \emph{state transitions} rather than file
handles: the relevant state is the full system state encoding all properties
on which admissibility depends, not merely a permission bit.  Second, we add
the $\Escalate$ outcome, which TOCTOU analyses do not consider.  Classical
TOCTOU mitigation strategies (e.g., using \texttt{openat} or atomic
rename) work by removing the gap; our framework shows that the same principle
applies to governance systems and that the gap cannot be closed by
re-evaluation alone.  Third, we provide a positive characterization of systems
that are immune to the gap (atomic systems), not only a characterization of
vulnerable systems.

\paragraph{Supervisory Control of Discrete Event Systems.}
Ramadge and Wonham~\cite{ramadge87} introduced the formal theory of
supervisory control, in which a supervisor enables or disables events in a
plant modeled as a DES.  Their framework distinguishes controllable events
(which the supervisor can prevent) from uncontrollable events (which it
cannot).  Our framework is related: agent actions correspond to controllable
events; environment actions correspond to uncontrollable ones.  The
$\Escalate$ outcome corresponds to the supervisor suspending a controllable
event pending further observation---a concept formalized in the DES literature
as \emph{conditional enabling}~\cite{ramadge87}.  Corollary~\ref{cor:escalate}
is the governance-system analogue of the requirement that supervisory
decisions be enforced at the moment of event firing.

\paragraph{Access Control.}
The access matrix model of Harrison, Ruzzo, and Ullman~\cite{hru76} is the
classical foundation of access control theory.  RBAC~\cite{sandhu96} and
attribute-based access control (ABAC) elaborate on it.  These models define
what operations principals are permitted to perform; they do not model the
temporal relationship between the permission check and the operation
execution.  Our work complements access control theory by characterizing when
that temporal relationship matters for correctness.

\paragraph{Reference Monitors.}
Anderson~\cite{anderson72} defined the reference monitor as a component that
must be invoked on every access, must be tamper-proof, and must be small
enough to be verified.  Our atomic decision boundary captures the first and
third properties at a formal level: the boundary is invoked at the moment of
transition (complete mediation) and its definition is minimal (a single
function $F$).  The ACP protocol~\cite{acp26} is explicitly designed as a
decentralized reference monitor; the present paper provides the formal basis
for why complete mediation requires atomicity.

\paragraph{Runtime Enforcement and Security Automata.}
Schneider~\cite{schneider00} characterized which security properties are
enforceable by inline reference monitors modeled as security automata: a
property is enforceable if and only if its violations are detectable in finite
traces (i.e., it is a safety property).  The admissibility preservation
requirement of Definition~\ref{def:admpres} is a safety property in precisely
this sense---each violation is witnessed by the finite trace $\sigma^*$.
Schneider's framework assumes the monitor can observe and halt executions; our
work identifies the structural condition under which that halt can be guaranteed
\emph{atomically with respect to the state transition being governed}.  A
monitor that halts asynchronously with the transition is still a split system
in our model.

Ligatti, Bauer, and Walker~\cite{ligatti05} extended security automata to
\emph{edit automata}, which can suppress or insert events to enforce a broader
class of policies.  The $\Escalate$ outcome in our model corresponds
structurally to suppression: the transition is prevented pending external
resolution.  The Escalation Closure Requirement (Corollary~\ref{cor:escalate})
is the governance-layer analogue of the requirement that edit operations
themselves be applied consistently with the current state of the running
system---a condition edit automata assume implicitly but do not formalize at
the level of decision boundaries.

\paragraph{Policy Engines.}
OPA~\cite{opa} and Cedar~\cite{cedar} provide expressive policy languages for
evaluating authorization decisions.  These systems are powerful within their
class (split evaluation) and our taxonomy does not imply that they are
deficient---only that they cannot provide execution-time admissibility
guarantees by construction.  Hybrid architectures in which a policy engine
provides the decision function $D$ inside an atomic enforcement layer are
consistent with our model; the atomicity requirement falls on the enclosing
layer, not on the policy engine itself.

\paragraph{Temporal Logic and Model Checking.}
Safety properties in temporal logic (e.g., $\square \neg \mathrm{bad}$) can
express admissibility requirements.  Model checkers verify whether a system
satisfies such properties.  Our contribution is complementary: we identify
which \emph{system architectures} can satisfy admissibility properties
universally (atomic) and which cannot (split), independent of any particular
property specification.

\section{Conclusion}
\label{sec:conclusion}

We have introduced the \emph{atomic decision boundary} as a structural
requirement for governance systems that must enforce admissibility at execution
time.  Using a formal LTS model, we proved that split evaluation systems
cannot provide this guarantee under all execution traces, and showed that
no enrichment of the split architecture---whether through additional policies,
external state, or re-evaluation---can close the structural gap without
introducing atomicity.

The $\Escalate$ outcome extends classical binary decision models and captures
supervisor-mediated governance.  We proved that escalation transfers, rather
than removes, the atomicity obligation: any system that resolves an
$\Escalate$ must itself be atomic at the point of resolution.

The core result is direct: \emph{admissibility is not a property of
evaluation---it is a property of execution.}  Under
Assumptions~\ref{asm:nontrivial} and~\ref{asm:env}, no system that separates
evaluation from execution can guarantee admissibility at execution time; only
a system that couples them as a single indivisible step at the decision
boundary can provide this guarantee.

This characterization serves as the formal foundation for the subsequent papers
in this series of six.  The Agent Control Protocol~(ACP)~\cite{acp26}
(Paper~1) instantiates $F$ as a concrete atomic boundary: each agent action
acquires a single-use Execution Token that is issued and consumed in the same
ledger operation, ensuring that the state against which admissibility is
evaluated is precisely the state against which the transition commits.  The
Invariant Measurement Layer~\cite{iml26} (Paper~2) operates \emph{above} the
atomic enforcement layer---taking atomicity as given---and detects behavioral
drift that accumulates across sequences of individually admissible actions,
remaining invisible to per-action enforcement signals.  Paper~3~\cite{fairgov26} extends the framework to the fair allocation of
governance decisions across competing agents, proving that correct enforcement
does not imply fair allocation and characterising the allocation layer
explicitly.  Paper~4~\cite{fernandez2026comp} proves that the four layers are
\emph{irreducible}: no strict subset can replicate the governance guarantees of
the full stack, and the composition itself produces emergent properties absent
from any single layer.  Paper~5~\cite{fernandez2026ram} closes the series by
addressing the runtime question left open by Papers~0--4: given that
observability is incomplete (Paper~2) and the architecture is irreducible
(Paper~4), the Reconstructive Authority Model (RAM) provides the operational
mechanism for determining whether execution authority can be constructed from
the current real state---and halts or narrows privileges when it cannot.

\medskip
\noindent\textit{Note on terminology.}\enspace
Throughout this paper the non-admissible outcome is written $\Refuse$.
Subsequent papers in the series (Papers~3--5) adopt the synonym
$\texttt{Deny}$, following ACP's \texttt{DENIED} protocol response
(Paper~1).  The decision domain
$\mathcal{D} = \{\Allow, \Escalate, \Refuse\}$ is the same object across the
series.

All six papers rest on the structural guarantee established here: that the
boundary they operate relative to is, in fact, atomic.

\bibliographystyle{plainnat}
\bibliography{references}

@misc{acp26,
  author       = {Fernandez, Marcelo},
  title        = {Agent Control Protocol: Admission Control for Agent Actions},
  year         = {2026},
  doi          = {10.5281/zenodo.19672575},
  howpublished = {\url{https://arxiv.org/abs/2603.18829}},
  note         = {arXiv:2603.18829 [cs.CR], DOI:10.5281/zenodo.19672575}
}

@misc{iml26,
  author       = {Fernandez, Marcelo},
  title        = {From Admission to Invariants: Measuring Deviation in
                 Delegated Agent Systems},
  year         = {2026},
  doi          = {10.5281/zenodo.19672589},
  howpublished = {\url{https://doi.org/10.5281/zenodo.19672589}},
  note         = {Zenodo. DOI: 10.5281/zenodo.19672589. arXiv:2604.17517}
}

@misc{fairgov26,
  author       = {Fernandez, Marcelo},
  title        = {Fair Atomic Governance: Allocating Decision Boundaries under
                 Shared Resource Constraints in Multi-Agent Systems},
  year         = {2026},
  doi          = {10.5281/zenodo.19672597},
  howpublished = {\url{https://doi.org/10.5281/zenodo.19672597}},
  note         = {Zenodo. DOI: 10.5281/zenodo.19672597}
}

@misc{fernandez2026comp,
  author       = {Fernandez, Marcelo},
  title        = {Irreducible Multi-Scale Governance: Composition and Limits of
                 Atomic Admission Systems},
  year         = {2026},
  doi          = {10.5281/zenodo.19672608},
  howpublished = {\url{https://doi.org/10.5281/zenodo.19672608}},
  note         = {Zenodo. DOI: 10.5281/zenodo.19672608}
}

@misc{fernandez2026ram,
  author       = {Fernandez, Marcelo},
  title        = {Reconstructive Authority Model: Runtime Execution Validity
                 Under Partial Observability},
  year         = {2026},
  doi          = {10.5281/zenodo.19669430},
  howpublished = {\url{https://doi.org/10.5281/zenodo.19669430}},
  note         = {Agent Governance Series, Paper~5. Zenodo. DOI: 10.5281/zenodo.19669430}
}

@article{herlihy90,
  author  = {Herlihy, Maurice P. and Wing, Jeannette M.},
  title   = {Linearizability: A Correctness Condition for Concurrent Objects},
  journal = {ACM Transactions on Programming Languages and Systems},
  volume  = {12},
  number  = {3},
  pages   = {463--492},
  year    = {1990}
}

@article{schneider00,
  author  = {Schneider, Fred B.},
  title   = {Enforceable Security Policies},
  journal = {ACM Transactions on Information and System Security},
  volume  = {3},
  number  = {1},
  pages   = {30--50},
  year    = {2000}
}

@article{ligatti05,
  author  = {Ligatti, Jay and Bauer, Lujo and Walker, David},
  title   = {Edit Automata: Enforcement Mechanisms for Run-time Security
             Policies},
  journal = {International Journal of Information Security},
  volume  = {4},
  number  = {1--2},
  pages   = {2--16},
  year    = {2005}
}

@techreport{anderson72,
  author      = {Anderson, James P.},
  title       = {Computer Security Technology Planning Study},
  institution = {USAF Electronic Systems Division},
  number      = {ESD-TR-73-51},
  year        = {1972}
}

@techreport{bishop96,
  author      = {Bishop, Matt and Bailey, David},
  title       = {A Critical Analysis of Vulnerability Taxonomies},
  institution = {University of California at Davis},
  number      = {CSE-96-11},
  year        = {1996}
}

@inproceedings{tsafrir08,
  author    = {Tsafrir, Dan and Hertz, Tomer and Wagner, David and Da Silva, Dilma},
  title     = {Portably Solving File {TOCTTOU} Races with Hardness Amplification},
  booktitle = {USENIX Conference on File and Storage Technologies (FAST)},
  year      = {2008}
}

@article{hru76,
  author  = {Harrison, Michael A. and Ruzzo, Walter L. and Ullman, Jeffrey D.},
  title   = {Protection in Operating Systems},
  journal = {Communications of the ACM},
  volume  = {19},
  number  = {8},
  pages   = {461--471},
  year    = {1976}
}

@article{sandhu96,
  author  = {Sandhu, Ravi S. and Coyne, Edward J. and Feinstein, Hal L.
             and Youman, Charles E.},
  title   = {Role-Based Access Control Models},
  journal = {IEEE Computer},
  volume  = {29},
  number  = {2},
  pages   = {38--47},
  year    = {1996}
}

@article{ramadge87,
  author  = {Ramadge, Peter J. and Wonham, W. Murray},
  title   = {Supervisory Control of a Class of Discrete Event Processes},
  journal = {SIAM Journal on Control and Optimization},
  volume  = {25},
  number  = {1},
  pages   = {206--230},
  year    = {1987}
}

@inproceedings{cedar,
  author    = {Cutler, John and others},
  title     = {{Cedar}: A New Policy Language},
  booktitle = {USENIX Security Symposium},
  year      = {2023}
}

@misc{opa,
  key          = {OPA},
  author       = {{Open Policy Agent Contributors}},
  title        = {Open Policy Agent},
  year         = {2026},
  howpublished = {\url{https://www.openpolicyagent.org}},
  note         = {Accessed April 2026}
}

\end{document}